\def\01{\{0,1\}}
\newcommand{\Tr}{\mathrm{Tr}}
\newtheorem{theorem}{Theorem}
\newcommand{\thmref}[1]{\hyperref[#1]{{Theorem~\ref*{#1}}}}
\newcommand{\lemref}[1]{\hyperref[#1]{{Lemma~\ref*{#1}}}}
\newcommand{\corref}[1]{\hyperref[#1]{{Corollary~\ref*{#1}}}}
\newcommand{\eqnref}[1]{\hyperref[#1]{{Equation~(\ref*{#1})}}}
\newcommand{\claimref}[1]{\hyperref[#1]{{Claim~\ref*{#1}}}}
\newcommand{\remarkref}[1]{\hyperref[#1]{{Remark~\ref*{#1}}}}
\newcommand{\propref}[1]{\hyperref[#1]{{Proposition~\ref*{#1}}}}
\newcommand{\factref}[1]{\hyperref[#1]{{Fact~\ref*{#1}}}}
\newcommand{\defref}[1]{\hyperref[#1]{{Definition~\ref*{#1}}}}
\newcommand{\exampleref}[1]{\hyperref[#1]{{Example~\ref*{#1}}}}
\newcommand{\hypref}[1]{\hyperref[#1]{{Hypothesis~\ref*{#1}}}}
\newcommand{\secref}[1]{\hyperref[#1]{{Section~\ref*{#1}}}}
\newcommand{\chapref}[1]{\hyperref[#1]{{Chapter~\ref*{#1}}}}
\newcommand{\apref}[1]{\hyperref[#1]{{Appendix~\ref*{#1}}}}
\newenvironment{proof}[1][Proof: ]
{\noindent {\bf #1}}
{{\hfill $\Box$}\\
\smallskip}
\newcommand{\veczero}{\mathbf{0}}
\newcommand{\A}{\mathbf{A}}
\newcommand{\B}{\mathbf{B}}
\newcommand{\Lov}{Lov{\'a}sz }
\newcommand{\ADV}{\mathrm{ADV}}
\begin{document}

\title{Product theorems via semidefinite programming}
\author{Troy Lee \\ Department of Computer Science \\ Rutgers University
\thanks{Supported by a NSF Mathematical Sciences Postdoctoral Fellowship.
Email: troyjlee@gmail.com}
\and
Rajat Mittal \\ Department of Computer Science \\ Rutgers University 
\thanks{Supported by NSF Grant 0523866.  Email: ramittal@cs.rutgers.edu}}
\date{}
\maketitle


\begin{abstract}
The tendency of semidefinite programs to compose perfectly under product has been exploited 
many times in complexity theory: for example, by Lov{\'a}sz to determine the Shannon capacity of
the pentagon; to show a direct sum theorem for non-deterministic communication complexity and 
direct product theorems for discrepancy; and in interactive proof systems to show parallel 
repetition theorems for restricted classes of games.

Despite all these examples of product theorems---some going back nearly thirty years---it was 
only recently that Mittal and Szegedy began to develop a general theory to explain when and 
why semidefinite programs behave perfectly under product.  This theory captured many examples in 
the literature, but there were also some notable exceptions which it could not explain---namely, an 
early parallel repetition result of Feige and Lov{\'a}sz, and a direct product theorem for the 
discrepancy method of communication complexity by Lee, Shraibman, and \v{S}palek.  

We extend the theory of Mittal and Szegedy to explain these cases as well.  Indeed, to the
best of our knowledge, our theory captures all examples of semidefinite product theorems 
in the literature.  
\end{abstract}

\section{Introduction}
A prevalent theme in complexity theory is what we might roughly call product theorems.  These
results look at how the resources to accomplish several independent tasks scale with the resources
needed to accomplish the tasks individually.  Let us look at a few examples of such questions:

\paragraph{Shannon Capacity}
If a graph $G$ has an independent set of size $\alpha$, how large an independent set 
can the product graph $G \times G$ have?  How does $\alpha$ compare with amortized independent
set size $\lim_{k \rightarrow \infty} \alpha(G^k)^{1/k}$?  This last quantity, known as the 
Shannon capacity, gives the effective alphabet size of a graph where vertices are labeled by 
letters and edges represent letters which can be confused if adjacent.     

\paragraph{Hardness Amplification}
Product theorems naturally arise in the context of hardness amplification.  If it is hard to evaluate a 
function $f(x)$, then an obvious approach to create a harder function is to evaluate two independent
copies $f'(x,y)=(f(x),f(y))$ of $f$.  There are different ways that $f'$ can be harder than $f$---a direct
sum theorem aims to show that evaluation of $f'$ requires twice as many resources as needed to
evaluate $f$; direct product theorems aim to show that the error probability to compute $f'$ 
is larger than that of $f$, given the same amount of resources.    

\paragraph{Soundness Amplification}
Very related to hardness amplification is what we might call soundness amplification.  This arises
in the context of interactive proofs where one wants to reduce the error probability of a protocol,
by running several checks in parallel.  The celebrated parallel repetition theorem shows that 
the soundness of multiple prover interactive proof systems can be boosted in this manner \cite{Raz98}.  

These examples illustrate that many important problems in complexity theory deal with product
theorems.  One successful approach to these types of questions has been through 
semidefinite programming.  In this approach, if one wants to know how some quantity 
$\sigma(G)$ behaves under product, one first looks at a semidefinite approximation 
$\bar \sigma(G)$ of $\sigma(G)$.  One then hopes to show that $\bar \sigma(G)$ provides a 
good approximation to $\sigma(G)$, and that $\bar \sigma(G \times G)=\bar \sigma(G) \bar \sigma(G)$.  
In this way one obtains that the original quantity must approximately product as well. 

Let us see how this approach has been used on some of the above questions.

\paragraph{Shannon Capacity} Perhaps the first application of this technique was to the 
Shannon capacity of a graph.  \Lov developed a semidefinite quantity, the \Lov theta function 
$\vartheta(G)$, showed that it
was a bound on the independence number of a graph, and that 
$\vartheta(G \times G)= \vartheta(G)^2$.  In this way he determined the Shannon capacity of 
the pentagon, resolving a long standing open problem \cite{Lov79}.

\paragraph{Hardness Amplification} Karchmer, Kushilevitz, and Nisan \cite{KKN95} notice that another
program introduced by Lov\'asz \cite{Lov75}, the fractional cover number, can be used to characterize 
non-deterministic communication complexity, up to small factors.  As this program also 
perfectly products, they obtain a direct sum theorem for non-deterministic 
communication complexity.  

As another example, Linial and Shraibman \cite{LS06} show that a semidefinite programming quantity 
$\gamma_2^{\infty}$ characterizes the discrepancy method of communication complexity, up to
constant factors.  Lee, Shraibman and \v{S}palek \cite{LSS08} then use this result, together with the 
fact that $\gamma_2^{\infty}$ perfectly products, to show a direct product theorem for 
discrepancy, resolving an open problem of Shaltiel \cite{Sha03}.

\paragraph{Soundness Amplification}  Although the parallel repetition theorem was eventually
proven by other means \cite{Raz98,Hol07}, one of the first positive results did use 
semidefinite programming.  Feige and \Lov \cite{FL92} show that the acceptance probability 
$\omega(G)$ of a two-prover interactive 
proof on input $x$ can be represented as an integer program.  They then study a 
semidefinite relaxation of this program, and use this to show that if $\omega(G) < 1$ then
$\sup_{k \rightarrow \infty} \omega(G^k)^{1/k} < 1$, for a certain class of games $G$.  
More recently, Cleve et al. \cite{CSUU07} look at two-prover games where the provers share entanglement, and show that the value of a special kind of such a game known as an XOR game can 
be exactly represented by a semidefinite program.  As this program perfectly products, they obtain a 
perfect parallel repetition theorem for this game.

We hope this selection of examples shows the usefulness of the semidefinite programming approach
to product theorems.  Until recently, however, this approach remained an ad hoc collection of 
examples without a theory to explain when and why semidefinite programs perfectly product.
Mittal and Szegedy \cite{MS07} began to address this lacuna by giving a general sufficient condition for 
a semidefinite program to obey a product rule.  This condition captures many examples in the 
literature, notably the \Lov theta function \cite{Lov79}, and the parallel repetition for XOR games with entangled provers \cite{CSUU07}.

Other examples cited above, however, do not fit into the Mittal and Szegedy framework: namely, 
the product theorem of Feige and \Lov \cite{FL92} and that for discrepancy \cite{LSS08}.  
We extend the condition of Mittal and Szegedy to capture these cases as well.  Indeed, in 
our (admittedly imperfect) search of the literature, we have not found a semidefinite product 
theorem which does not fit into our framework.

\section{Preliminaries}
We begin with some notational conventions and basic definitions which will be useful.  In general, 
lower case letters like $v$ will denote column vectors, and upper case letters like $A$ will
denote matrices.  Vectors and matrices will be over the real numbers.  The notation $v^T$ or $A^T$ 
will denote the transpose of a vector or matrix.  We will say $A \succeq 0$ if $A$ is positive semidefinite,
i.e.\ if $A$ is symmetric and $v^T A v \ge 0$ for all vectors $v$.  

We will use several kinds of matrix products.  We write $AB$ for the normal matrix product.  
For two matrices $A,B$ of the same dimensions, $A \circ B$ denotes the matrix formed by their 
entrywise product.  That is, $(A \circ B)[x,y]=A[x,y] B[x,y]$.  We will use $A \bullet B$ for the entrywise 
sum of $A \circ B$.  Equivalently, $A \bullet B =\Tr(AB^T)$.  We will use the notation
$v \ge w$ to indicate that the vector $v$ is entrywise greater than or equal to the vector $w$.

In applications we often face the situation where we would like to use the framework of semidefinite 
programming, which requires symmetric matrices, but the problem at hand is represented by matrices
which are not symmetric, or possibly not even square.  Fortunately, this can often be handled by 
a simple trick.  This trick is so useful that we will give it its own notation.  For an arbitrary real matrix $A$, 
we define
$$
\widehat A=\begin{bmatrix}
0 & A \\
A^T & 0
\end{bmatrix}
$$
We will refer to this as the {\em bipartite version} of $A$, as such a matrix corresponds to the adjacency
matrix of a (weighted) bipartite graph.  In many respects $\widehat A$ behaves similarly to $A$, but
has the advantages of being symmetric and square.  

More generally, we will refer to a matrix $M$ which can be written as
$$
M=\begin{bmatrix}
0 & A \\
B & 0
\end{bmatrix}
$$
as {\em block anti-diagonal} and a matrix $M$ which can be written
$$
M=\begin{bmatrix}
D_1 & 0 \\
0 & D_2
\end{bmatrix}
$$
as {\em block diagonal}.

One subtlety that arises in working with the bipartite version $\widehat A$ instead of $A$
itself is in defining the product of instances.  Mathematically, it is most convenient to work with
the normal tensor product 
$$
\widehat A \otimes \widehat A = \begin{bmatrix}
0&0&0& A \otimes A \\
0&0&A \otimes A^T &0 \\
0& A^T \otimes A &0&0 \\
A^T \otimes A^T &0&0&0
\end{bmatrix}
$$
Whereas what naturally arises in the product of problems is instead the ``bipartite tensor'' product 
of $A$:
$$
\widehat{A \otimes A}=\begin{bmatrix}
0 & A \otimes A \\
A^T \otimes A^T & 0
\end{bmatrix}
$$

Kempe, Regev, and Toner \cite{KRT07} observe, however, that a product theorem for the tensor 
product implies a product theorem for the bipartite tensor product.  This essentially follows because 
$\widehat{A \otimes A}$ is a submatrix of $\widehat A \otimes \widehat A$, and so positive 
semidefiniteness of the latter implies positive semidefiniteness of the former.  See \cite{KRT07} for
full details.  

\section{Product rule with non-negativity constraints}
In this section we prove our main theorem extending the product theorem of Mittal and 
Szegedy \cite{MS07} to handle non-negativity constraints.  As our work builds on the framework
developed by Mittal and Szegedy, let us first explain their results.  

Mittal and Szegedy consider a general affine semidefinite program $\pi=(J,\A,b)$.  
Here $\A=(A_1, \ldots, A_m)$ is a vector of matrices, and we extend the notation $\bullet$ such that 
$\A \bullet X=(A_1\bullet X, A_2 \bullet X, \ldots, A_m \bullet X)$.  The value of $\pi$ is given as
\begin{align*}
\alpha(\pi) =&\max_X J \bullet X \mbox{ such that} \\
          & \A \bullet X =b \\
          & X \succeq 0.
\end{align*}
We take this as the primal formulation of $\pi$.  Part of what makes semidefinite programming
so useful for proving product theorems is that we can also consider the dual formulation of $\pi$.  
Dualizing in the straightforward way gives:
\begin{align*}
\alpha^*(\pi)=& \min_y y^T b \\
     & y^T \A -J \succeq 0
\end{align*}
A necessary pre-condition for the semidefinite programming approach to proving product theorems 
is that so-called strong duality holds.  That is, that
$\alpha(\pi)=\alpha^*(\pi)$, the optimal primal and dual values agree.  We will assume this
throughout our discussion.  For more information about strong duality and sufficient conditions for it 
to hold, see \cite{BV06}. 

We define the product of programs as follows: for $\pi_1=(J_1, \A_1, b_1)$ and 
$\pi_2=(J_2, \A_2, b_2)$ we define $\pi_1 \times \pi_2=(J_1 \otimes J_2, \A_1 \otimes \A_2,
b_1 \otimes b_2)$.  If $\A_1$ is a tuple of $m_1$ matrices and $\A_2$ is a tuple of $m_2$ matrices, 
then the tensor product $\A_1 \otimes \A_2$ is a tuple of $m_1 m_2$ matrices consisting of all the 
tensor products $\A_1[i] \otimes \A_2[j]$.  

It is straightforward to see that $\alpha(\pi_1 \times \pi_2) \ge \alpha(\pi_1) \alpha(\pi_2)$.  Namely,
if $X_1$ realizes $\alpha(\pi_1)$ and $X_2$ realizes $\alpha(\pi_2)$, then $X_1 \otimes X_2$ will
be a feasible solution to $\pi_1 \times \pi_2$ with value $\alpha(\pi_1)\alpha(\pi_2)$.  This is because
$X_1 \otimes X_2$ is positive semidefinite, $(\A_1 \otimes \A_2) \bullet (X_1 \otimes X_2)=
(\A_1 \bullet X_1) \otimes (\A_2 \bullet X_2)=b_1 \otimes b_2$, and 
$(J_1 \otimes J_2) \bullet (X_1 \otimes X_2)=(J_1 \bullet X_1) \otimes (J_2 \bullet X_2) = 
\alpha(\pi_1) \alpha(\pi_2)$.

Mittal and Szegedy show the following theorem giving sufficient conditions for the reverse 
inequality $\alpha(\pi_1 \times \pi_2) \le \alpha(\pi_1) \alpha(\pi_2)$.  

\begin{theorem}[Mittal and Szegedy \cite{MS07}]
\label{MS}
Let $\pi_1=(J_1, \A_1, b_1), \pi_2=(J_2, \A_2, b_2)$ be two affine semidefinite programs 
for which strong duality holds.  
Then $\alpha(\pi_1 \times \pi_2) \le \alpha(\pi_1) \alpha(\pi_2)$ if either of the following two 
conditions hold:
\begin{enumerate}
  \item $J_1, J_2 \succeq 0$.
  \item (Bipartiteness) There is a partition of rows and columns into two sets such that with respect to this partition, 
  $J_i$ is block anti-diagonal, and all matrices in $\A_i$ are block diagonal, for $i \in \{1,2\}$.    
\end{enumerate}
\end{theorem}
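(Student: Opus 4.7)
The natural plan is to exhibit a dual feasible solution for $\pi_1 \times \pi_2$ whose value matches $\alpha(\pi_1)\alpha(\pi_2)$. By strong duality, pick optimal dual solutions $y_1,y_2$ for $\pi_1,\pi_2$; that is, $P_i := y_i^T \A_i$ satisfies $P_i - J_i \succeq 0$ and $y_i^T b_i = \alpha(\pi_i)$. The obvious candidate for the product is $y := y_1 \otimes y_2$. Its objective value is $(y_1 \otimes y_2)^T (b_1 \otimes b_2) = (y_1^T b_1)(y_2^T b_2) = \alpha(\pi_1)\alpha(\pi_2)$, and by the mixed-product property of the tensor product, $y^T (\A_1 \otimes \A_2) = P_1 \otimes P_2$. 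So the entire content of the theorem reduces to establishing that $P_1 \otimes P_2 - J_1 \otimes J_2 \succeq 0$, assuming only that $P_i - J_i \succeq 0$ together with one of the two side hypotheses.

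For case (1), I would use the telescoping identity
\[
P_1 \otimes P_2 - J_1 \otimes J_2 = P_1 \otimes (P_2 - J_2) + (P_1 - J_1) \otimes J_2.
\]
Since $J_i \succeq 0$ and $P_i \succeq J_i \succeq 0$, each term on the right is a tensor product of two positive semidefinite matrices, hence positive semidefinite, and the sum is positive semidefinite. This case is essentially routine.

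The interesting step, and the main obstacle, is case (2), where $J_i$ is not positive semidefinite. The key observation is that the bipartite structure lets us flip the sign of $J_i$ by conjugation. Because every matrix in $\A_i$ is block diagonal under the given partition, so is $P_i = y_i^T \A_i$; while $J_i$ is block anti-diagonal. Letting $U = \bigl(\begin{smallmatrix} I & 0 \\ 0 & -I \end{smallmatrix}\bigr)$ with respect to the same partition, we have $U P_i U^T = P_i$ and $U J_i U^T = -J_i$, so $P_i + J_i = U(P_i - J_i)U^T \succeq 0$ as well. Thus $P_i \pm J_i \succeq 0$ for $i=1,2$. Now use the symmetrized decomposition
\[
P_1 \otimes P_2 - J_1 \otimes J_2 = \tfrac{1}{2}\bigl[(P_1 - J_1) \otimes (P_2 + J_2) + (P_1 + J_1) \otimes (P_2 - J_2)\bigr],
\]
which one checks by expanding both sides. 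Each summand is now a tensor product of positive semidefinite matrices, completing the argument.

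Overall, the hard part is spotting the right algebraic identities in each case; once one notices that the bipartite hypothesis is exactly what is needed to promote $P_i - J_i \succeq 0$ to $P_i + J_i \succeq 0$, both cases reduce to a single short linear-algebra manipulation. I expect no further subtleties: the product feasibility of $y_1 \otimes y_2$ for the affine constraints is built into the structure of the dual as soon as $P_1 \otimes P_2 \succeq J_1 \otimes J_2$ holds.
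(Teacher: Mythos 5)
Your proposal is correct, and it is essentially the argument this paper uses: the paper cites the statement from \cite{MS07} without reproducing a proof, but its own proof of \thmref{main} is exactly your case~(2) machinery --- an optimal dual pair $y_1,y_2$, the bipartite sign-flip showing $P_i+J_i\succeq 0$ from $P_i-J_i\succeq 0$ (via the $(x_1,-x_2)$ substitution, equivalent to your conjugation by $\mathrm{diag}(I,-I)$), and the averaged decomposition $\tfrac12[(P_1-J_1)\otimes(P_2+J_2)+(P_1+J_1)\otimes(P_2-J_2)]$ yielding dual feasibility of $y_1\otimes y_2$ with value $\alpha(\pi_1)\alpha(\pi_2)$. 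Your telescoping argument for case~(1) is the standard one and is fine.
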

 
We extend item~(2) of this theorem to also handle non-negativity constraints.  This
is a class of constraints which seems to arise often in practice, and allows us to capture 
cases in the literature that the original work of Mittal and Szegedy does not.  
More precisely, we consider programs of the following form:
\begin{align*}
\alpha(\pi) =&\max_X J \bullet X \mbox{ such that} \\
          & \A \bullet X =b \\
          & \B \bullet X \ge \veczero \\
          & X \succeq 0
\end{align*}
Here both $\A$ and $\B$ are vectors of matrices, and $\veczero$ denotes the all 0 vector.  

We should point out a subtlety here.  A program of this form can be equivalently written as an 
affine program by suitably extending $X$ and modifying $\A$ accordingly to enforce the 
$\B \bullet X \ge \veczero$ constraints through the $X \succeq 0$ condition.
The catch is that two equivalent programs do not necessarily lead to equivalent product instances.    
We explicitly separate out the non-negativity constraints here so that we can define the product as 
follows: for two programs, 
$\pi_1=(J_1, \A_1, b_1, \B_1)$ and $\pi_2=(J_2, \A_2, b_2, \B_2)$ we say
$$
\pi_1 \times \pi_2=(J_1 \otimes J_2, \A_1 \otimes \A_2, b_1 \otimes b_2, \B_1 \otimes \B_2).
$$
Notice that the equality constraints and non-negativity constraints do not interact in the 
product, which is usually the intended meaning of the product of instances.  

It is again straightforward to see that $\alpha(\pi_1 \times \pi_2) \ge \alpha(\pi_1)\alpha(\pi_2)$, 
thus we focus on the reverse inequality.  We extend Condition (2) of \thmref{MS} to the case of 
programs with non-negativity constraints.  As we will see in \secref{apps}, this theorem captures 
the product theorems of Feige-\Lov \cite{FL92} and discrepancy \cite{LSS08}.

\begin{theorem}
\label{main}
Let $\pi_1=(J_1, \A_1, b_1,\B_1)$ and $\pi_2=(J_2, \A_2, b_2, \B_2)$ be two semidefinite programs
for which strong duality holds.  Suppose the following two conditions hold:
\begin{enumerate}
  \item (Bipartiteness) There is a partition of rows and columns into two sets 
such that, with respect to this partition, $J_i$ and all the matrices of $\B_i$ are block anti-diagonal, 
and all the matrices of $\A_i$ are block diagonal, for $i \in \{1,2\}$.
  \item There are non-negative vectors $u_1, u_2$ such that $J_1=u_1^T \B_1$ and $J_2=u_2^T \B_2$.  
\end{enumerate}
Then $\alpha(\pi_1 \times \pi_2) \le \alpha(\pi_1) \alpha(\pi_2)$.
\end{theorem}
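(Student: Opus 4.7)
My plan is to combine optimal dual solutions for $\pi_1$ and $\pi_2$ into a feasible dual solution for $\pi_1 \times \pi_2$ whose objective equals $\alpha(\pi_1)\alpha(\pi_2)$; the result then follows from weak duality for the product program. Setting up the Lagrangian in the standard way, the dual of a program of the given form is
\begin{align*}
\alpha^*(\pi) = \min_{y,\,z\ge 0}\; y^T b \quad \text{subject to}\quad y^T \A - z^T \B - J \succeq 0.
\end{align*}
Let $(y_i, z_i)$ be optimal dual solutions for $\pi_i$, so by strong duality $y_i^T b_i = \alpha(\pi_i)$. The natural candidate for $\pi_1 \times \pi_2$ is $y = y_1 \otimes y_2$ together with a non-negative $z$ to be constructed; note that $y^T(b_1 \otimes b_2) = \alpha(\pi_1)\alpha(\pi_2)$ regardless of $z$, so only feasibility must be checked.

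Write $D_i = y_i^T \A_i$ (block diagonal by Condition~(1)) and $N_i = (z_i + u_i)^T \B_i$ (block anti-diagonal, using Condition~(2) to rewrite $J_i = u_i^T \B_i$). Dual feasibility of $(y_i, z_i)$ reads $M_i := D_i - N_i \succeq 0$. The key observation is that conjugating by the sign-flip matrix $E = \mathrm{diag}(I, -I)$ (with respect to the given partition) negates block anti-diagonal matrices and fixes block diagonal ones, so $E M_i E = D_i + N_i$, which is also PSD since $E$ is symmetric orthogonal. Thus $D_i \succeq N_i$ \emph{and} $D_i \succeq -N_i$ (and, as a bonus, $D_i \succeq 0$).

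For the product I take $z[i,j] := (z_1+u_1)[i]\,(z_2+u_2)[j] - u_1[i]\,u_2[j]$. Expanding, $z[i,j] = z_1[i]z_2[j] + z_1[i]u_2[j] + u_1[i]z_2[j] \ge 0$ since $z_i, u_i \ge 0$, so this is a legitimate non-negative dual multiplier. A direct computation gives
\begin{align*}
z^T(\B_1 \otimes \B_2) = N_1 \otimes N_2 - J_1 \otimes J_2,
\end{align*}
so the dual constraint $(y_1 \otimes y_2)^T(\A_1 \otimes \A_2) - z^T(\B_1 \otimes \B_2) - J_1 \otimes J_2 \succeq 0$ reduces to the clean inequality $D_1 \otimes D_2 \succeq N_1 \otimes N_2$.

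This remaining inequality follows from the identity
\begin{align*}
(D_1 + N_1)\otimes(D_2 - N_2) + (D_1 - N_1)\otimes(D_2 + N_2) = 2\bigl(D_1 \otimes D_2 - N_1 \otimes N_2\bigr),
\end{align*}
whose left side is a sum of tensor products of PSD matrices and is therefore PSD. The main conceptual obstacle is spotting the right non-negative $z$: one must absorb the objective $J_i = u_i^T \B_i$ into the inequality side via the shift $z_i \mapsto z_i + u_i$, then subtract back exactly the ``pure'' $u_1 \otimes u_2$ term so that the $J_1 \otimes J_2$ on the right of the dual constraint cancels. The bipartiteness is what permits the $\pm$ version of dual feasibility used in the final symmetrization.
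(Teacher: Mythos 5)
Your proposal is correct and follows essentially the same route as the paper: pass to the duals, use the sign-flip conjugation $E=\mathrm{diag}(I,-I)$ enabled by bipartiteness to get both $D_i - N_i \succeq 0$ and $D_i + N_i \succeq 0$, symmetrize the tensor products, and take exactly the paper's non-negative multiplier $z = z_1 \otimes z_2 + z_1 \otimes u_2 + u_1 \otimes z_2$ (your $(z_1+u_1)\otimes(z_2+u_2) - u_1 \otimes u_2$), concluding by weak duality for the product program. The only difference is presentational: you package $z_i^T\B_i + J_i$ as $N_i$ and state the final step as $D_1 \otimes D_2 \succeq N_1 \otimes N_2$, which is the paper's averaging identity in compact form.
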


\begin{proof}
To prove the theorem it will be useful to consider the dual formulations of $\pi_1$ and $\pi_2$.
Dualizing in the standard fashion, we find
\begin{align*}
\alpha(\pi_1)=& \min_{y_1} \ y_1^T b_1 \mbox{ such that} \\
                     & y_1^T \A_1 - (z_1^T \B_1 + J_1) \succeq 0 \\
                     & z_1 \ge 0
\end{align*}
and similarly for $\pi_2$.  Fix $y_1, z_1$ to be vectors which realizes this optimum for $\pi_1$ and 
similarly $y_2, z_2$ for $\pi_2$.  The key observation of the proof is that if we can also show that 
\begin{equation}
y_1^T \A_1 + (z_1^T \B_1 + J_1) \succeq 0 \mbox{ and } y_2^T \A_2 + (z_2^T \B_2 + J_2) \succeq 0
\label{plus}
\end{equation}
then we will be done.  Let us for the moment assume \eqnref{plus} and
see why this is the case.  

If \eqnref{plus} holds, then we also have
\begin{align*}
\left(y_1^T \A_1 - (z_1^T \B_1+ J_1) \right)\otimes \left(y_2^T \A_2+(z_2^T \B_2 + J_2)\right) \succeq 0 \\
\left(y_1^T \A_1 + (z_1^T \B_1 + J_1)\right)\otimes \left(y_2^T \A_2 - (z_2^T \B_2 +J_2)\right) \succeq 0
\end{align*}
Averaging these equations, we find
$$
(y_1 \otimes y_2)^T (\A_1 \otimes \A_2) -  \left((z_1^T \B_1 + J_1)\otimes (z_2^T \B_2 + J_2)\right) 
\succeq 0.
$$
Let us work on the second term.  We have
\begin{align*}
(z_1^T \B_1 + J_1)\otimes (z_2^T \B_2 + J_2)&=(z_1 \otimes z_2)^T (\B_1 \otimes \B_2) 
+ z_1^T \B_1 \otimes J_2 + J_1 \otimes z_2^T \B_2 + J_1 \otimes J_2 \\
&=(z_1 \otimes z_2)^T (\B_1 \otimes \B_2)+ (z_1 \otimes u_2)^T \B_1 \otimes \B_2 \\
&+ (u_1 \otimes z_2)^T \B_1 \otimes \B_2 + J_1 \otimes J_2.
\end{align*}

Thus if we let $v=z_1 \otimes z_2 + z_1 \otimes u_2+u_1 \otimes z_2$ we see that $v \ge 0$
as all of $z_1, z_2, u_1, u_2$ are, and also
$$
(y_1 \otimes y_2)^T \otimes (\A_1 \otimes \A_2) - (v^T (\B_1 \otimes \B_2) +J_1 \otimes J_2) \succeq 0.
$$
Hence $(y_1 \otimes y_2, v)$ form a feasible solution to the dual formulation of 
$\pi_1 \times \pi_2$ with value $(y_1 \otimes y_2)(b_1 \otimes b_2)=\alpha(\pi_1) \alpha(\pi_2)$.  

It now remains to show that \eqnref{plus} follows from the condition of the theorem.  
Given $y \A - (z^T\B +J) \succeq 0$ and the bipartiteness condition of the theorem, we will 
show that $y \A + (z^T\B +J) \succeq 0$.  This argument can then be applied to both $\pi_1$ and 
$\pi_2$.  

We have that $y^T \A$ is block diagonal and $z^T \B + J$ is block anti-diagonal 
with respect to the same partition.  Hence for any vector $x^T=\begin{bmatrix} x_1 & x_2\end{bmatrix}$, 
we have 
$$
\begin{bmatrix}
x_1 & x_2
\end{bmatrix}
\left(y^T \A - (z^T \B + J)\right)
\begin{bmatrix}
x_1 \\
x_2
\end{bmatrix} 
 =
 \begin{bmatrix}
 x_1 & -x_2
 \end{bmatrix} 
 \left(y^T \A + (z^T \B + J) \right) 
 \begin{bmatrix}
 x_1 \\
 -x_2
 \end{bmatrix}
$$
Thus the positive semidefiniteness of $y \A + (z^T\B +J)$ follows from that of $y \A - (z^T\B +J)$.
\end{proof}

One may find the condition that $J$ lies in the positive span of $\B$ in the statement of
\thmref{main} somewhat unnatural.  If we remove this condition, however, a simple counterexample shows that the theorem no longer holds.  Consider the program
\begin{align*}
\alpha(\pi) =&\max_X \ \begin{bmatrix}0 & -1 \\ -1 & 0 \end{bmatrix} \bullet X \\
& \mbox{ such that } I \bullet X =1,
 \begin{bmatrix}0 & 1 \\ 0 & 0\end{bmatrix} \bullet X \ge 0,
\begin{bmatrix}0 & 0 \\ 1 & 0\end{bmatrix} \bullet X \ge 0,
X \succeq 0.
\end{align*}
Here $I$ stands for the $2$-by-$2$ identity matrix.  This program satisfies the bipartiteness 
condition of \thmref{main}, but $J$ does not lie in the positive
span of the matrices of $\B$.  It is easy to see that the value of this program is zero.  The program
$\pi \times \pi$, however, has positive value as $J \otimes J$ does not have any negative entries
but is the matrix with ones on the main anti-diagonal.   

\section{Applications}
\label{apps}
Two notable examples of semidefinite programming based product theorems in the literature are not
captured by \thmref{MS}.  Namely, a recent direct product theorem for the discrepancy method of
communication complexity, and an early semidefinite programming based parallel repetition
result of Feige and Lov\'asz.  As we now describe in detail, these product theorems can be explained
by \thmref{main}.

\subsection{Discrepancy}
Communication complexity is an ideal model to study direct sum and direct product theorems as it 
is simple enough that one can often hope to attain tight results, yet powerful enough that such
theorems are non-trivial and have applications to reasonably powerful models of computation.  
See \cite{KN97} for more details on communication complexity and its applications.

Shaltiel \cite{Sha03} began a systematic study of when we can expect direct product theorems 
to hold, and in particular looked at this question in the model of communication complexity for 
exactly these reasons.  He showed a general counterexample where a direct product theorem does 
not hold, yet also proved a direct product for communication complexity lower bounds shown by 
a particular method---the discrepancy method under the uniform distribution.  Shaltiel does not 
explicitly use semidefinite programming techniques, but proceeds by relating discrepancy under 
the uniform distribution to the spectral norm, which can be cast as a semidefinite program.

This result was recently generalized and strengthened by Lee, Shraibman, and \v{S}palek 
\cite{LSS08} who show an essentially optimal direct product theorem for discrepancy under 
arbitrary distributions.  This result follows the general plan for showing product theorems via
semidefinite programming: they use a result of Linial and Shraibman \cite{LS06} that  a semidefinite 
programming quantity $\gamma_2^{\infty}(M)$ characterizes the discrepancy of the communication
matrix $M$ up to a constant factor, and then show that $\gamma_2^{\infty}(M)$ perfectly products.  
The semidefinite programming formulation of $\gamma_2^{\infty}(M)$ is not affine but involves
non-negativity constraints, and so does not fall into the original framework of Mittal and Szegedy.  

Let us now look at the semidefinite program describing $\gamma_2^{\infty}$:
\begin{align*}
\gamma_2^\infty(M)= & \max_X \ \widehat M \bullet X \mbox{ such that} \\
& X \bullet I = 1 \\
& X \bullet E_{ij}=0 \mbox{ for all } i \ne j \le m, i \ne j \ge m \\
& X \bullet (\widehat M \circ E_{ij}) \ge 0 \mbox{ for all } i \le m, j \ge m, \mbox{ and } i \ge m, j \le m \\
& X \succeq 0.
\end{align*}
Here $E_{i,j}$ is the 0/1 matrix with exactly one entry equal to $1$ in coordinate $(i,j)$.  
In this case, $\A$ is formed from the matrices $I$ and $E_{ij}$ for $i \ne j \le m$ and $i \ne j \ge m$.  
These matrices are all block diagonal with respect to the natural partition of $\widehat M$.  
Further, the objective matrix $\widehat M$ and matrices of $\B$ are all block anti-diagonal with 
respect to this partition.  Finally, we can express $\widehat M=u^T \B$ by simply taking $u$ to be the 
all 1 vector.  

\subsection{Feige-Lov\'asz}
In a seminal paper, Babai, Fortnow, and Lund \cite{BFL91} show that all of non-deterministic exponential time can be captured by interactive proof systems with two-provers and polynomially many rounds.  The attempt to characterize the power of two-prover systems with just one round sparked 
interest in a parallel repetition theorem---the question of whether the soundness of a two-prover 
system can be amplified by running several checks in parallel.  Feige and \Lov \cite{FL92} ended up 
showing that two-prover one-round systems capture NEXP by other means, and a proof of a 
parallel repetition theorem turned out to be the more difficult question \cite{Raz98}.
In the same paper, however, Feige and \Lov also take up the study of parallel repetition theorems
and show an early positive result in this direction.  

In a two-prover one-round game, the Verifier 
is trying to check if some input $x$ is in the language $L$.  The Verifier chooses questions 
$s \in S, t \in T$ with some probability $P(s,t)$ and then sends question $s$ to prover Alice, and 
question $t$ to prover Bob.  
Alice sends back an answer $u \in U$ and Bob replies $w \in W$, and then the Verifier answers 
according to some Boolean predicate $V(s,t,u,w)$.  We call this a game $G(V,P)$, and write the
acceptance probability of the Verifier as $\omega(G)$.  
In much the same spirit as the result of \Lov on the Shannon capacity of a graph, Feige and 
\Lov show that if the value of a game $\omega(G) <1$ then also $\sup_k \omega(G^k)^{1/k} < 1$, 
for a certain class of games known as unique games.

The proof of this result proceeds in the usual way: Feige and \Lov first show that $\omega(G)$ 
can be represented as a quadratic program.  They then relax this quadratic program in the natural
way to obtain a semidefinite program with value $\sigma(G) \ge \omega(G)$.  Here the proof faces
an extra complication as $\sigma(G)$ does not perfectly product either.  Thus another round of 
relaxation is done, throwing out some constraints to obtain a program with value 
$\bar \sigma(G) \ge \sigma(G)$ which does perfectly product.  Part of our motivation for proving 
\thmref{main} was to uncover the ``magic'' of this second round of relaxation, and explain why 
Feige and \Lov remove the constraints they do in order to obtain something which perfectly products.

Although the parallel repetition theorem was eventually proven by different means 
\cite{Raz98,Hol07}, the semidefinite programming approach has recently seen renewed
interest for showing tighter parallel repetition theorems for restricted classes of games and where 
the provers share entanglement \cite{CSUU07,KRT07}.

\subsubsection{The relaxed program}
As mentioned above, Feige and \Lov first write $\omega(G)$ as an integer program, and 
then relax this to a semidefinite program with value $\sigma(G) \ge \omega(G)$.  We now describe
this program.  The objective matrix $C$ is a $|S| \times |U|$-by-$|T| \times |W|$ matrix where the rows are labeled by
pairs $(s,u)$ of possible question and answer pairs with Alice and similarly the columns are labeled
by $(t,w)$ possible dialogue with Bob.  The objective matrix for a game $G=(V,P)$ is given by
$C[(s,u),(t,w)]=P(s,t) V(s,t,u,w)$.  We also define an auxiliary matrices $B_{st}$ of dimensions the
same as $\widehat C$, where $B_{st}[(s',u), (t',w)]=1$ if $s=s'$ and $t=t'$ and is zero otherwise.  

With these notations in place, we can define the program:
\begin{align}
\sigma(G)=&\max_{X} \frac{1}{2} \widehat C \bullet X \mbox{ such that} \\
& X \bullet B_{st} =1 \mbox{ for all } s,t \in S \cup T \\
& X \ge 0 \\
& X \succeq 0
\end{align}
We see that we cannot apply \thmref{main} here as we have global non-negativity constraints 
(not confined to the off-diagonal blocks) and global equality constraints (not confined to the 
diagonal blocks).  Indeed, Feige and \Lov remark that this program does not perfectly product.  

Feige and \Lov then consider a further relaxation with value $\bar \sigma(G)$ whose program does
fit into our framework.  
They throw out all the constraints of Equation~(3) which are off-diagonal, and remove the 
non-negativity constraints for the on-diagonal blocks of $X$.  
More precisely, they consider the following program:
\begin{align}
\bar \sigma(G)=&\max_{X} \ \frac{1}{2} \widehat C \bullet X \mbox{ such that} \\
& \sum_{u,w \in U} |X[(s,u),(s',w)]| \le 1 \mbox{ for all } s,s' \in S \\
& \sum_{u,w \in W} |X[(t,u),(t',w)]| \le 1 \mbox{ for all } t,t' \in T \\
& X \bullet E_{(s,u),(t,w)} \ge 0 \mbox{ for all } s \in S, t \in T, u \in U, w \in W   \\
& X \succeq 0
\end{align}

Let us see that this program fits into the framework of \thmref{main}.  The vector of matrices $\B$ 
is composed of the matrices $E_{(s,u),(t,w)}$ for $s \in S, u \in U$ and $t \in T, w \in W$.  Each of 
these matrices is block diagonal with respect to the natural partition of $\widehat C$.  Moreover, as 
$\widehat C$ is non-negative and bipartite, we can write $\widehat C = u^T \B$ for a non-negative 
$u$, namely where $u$ is given by concatenation of the entries of $C$ and $C^T$ written as a 
long vector.

The on-diagonal constraints given by Equations~(7), (8) are
not immediately seen to be of the form needed for \thmref{main} for two reasons: first, they are 
inequalities rather than equalities, and second, they have of absolute value signs.  Fortunately,
both of these problems can be easily dealt with.

It is not hard to check that \thmref{main} also works for inequality constraints 
$\A \bullet X \le b$.  The only change needed is that in the dual formulation we have the additional
constraint $y \ge \veczero$.  This condition is preserved in the product solution 
constructed in the proof of \thmref{main} as $y \otimes y \ge 0$.  

The difficulty in allowing constraints of the form $\A \bullet X \le b$ is in fact that the opposite direction 
$\alpha(\pi_1 \times \pi_2) \ge \alpha(\pi_1)\alpha(\pi_2)$ does not hold in general.  Essentially, what
can go wrong here is that $a_1, a_2 \le b$ does 
not imply $a_1 a_2 \le b^2$.  In our case, however, this does not occur as all the terms involved 
are positive and so one can show 
$\bar \sigma(G_1 \times G_2) \ge \bar \sigma(G_1) \bar \sigma(G_2)$.
 
To handle the absolute value signs we consider an equivalent formulation of 
$\bar \sigma(G)$.  We replace the condition that the sum of absolute values is at most one by 
constraints saying that the sum of every possible $\pm$ combination of values is at most one:
\begin{align*}
\bar \sigma'(G)=&\max_{X} \frac{1}{2} \widehat C \bullet X \mbox{ such that} \\
& \sum_{u,w \in U} (-1)^{x_{uw}} X[(s,u), (s',w)] \le 1 \mbox{ for all } s,s' \in S \mbox { and } 
x \in \{0,1\}^{|U|^2} \\
& \sum_{u,w \in W} (-1)^{x_{uw}} X[(t,u),(t',w)] \le 1 \mbox{ for all } t,t' \in T \mbox { and } 
x \in \{0,1\}^{|W|^2} \\
& X \bullet E_{(s,u),(t,w)} \ge 0 \mbox{ for all } s \in S, t \in T, u \in U, w \in W   \\
& X \succeq 0
\end{align*}

This program now satisfies the conditions of \thmref{main}.
It is clear that $\bar \sigma(G)=\bar \sigma'(G)$, and also that this equivalence is preserved under
product.  Thus the product theorem for $\bar \sigma(G)$ follows from \thmref{main} as well.

\section{Conclusion}
We have now developed a theory which covers all examples of semidefinite programming 
product theorems we are aware of in the literature.  Having such a theory which can be applied in 
black-box fashion should simplify the pursuit of product theorems via semidefinite programming
methods, and we hope will find future applications.  That being said, we still think there is more
work to be done to arrive at a complete understanding of semidefinite product theorems.  
In particular, we do not know the extension of item~(1) of \thmref{MS} to the case of non-negative
constraints, and it would nice to understand to what extent item~(2) of \thmref{main} can 
be relaxed. 

So far we have only considered tensor products of programs.  One could also try for more
general {\em composition} theorems: in this setting, if one has a lower bound on the complexity 
of $f:\01^n \rightarrow \01$ and $g:\01^k \rightarrow \01$, one would like to obtain a lower bound
on $(f \circ g) (\vec x)=f(g(x_1), \ldots, g(x_n))$.  What we have studied so far in looking at tensor
products corresponds to the special cases where $f$ is the PARITY or AND function, depending on
if the objective matrix is a sign matrix or a $0/1$ valued matrix.  One example of such a 
general composition theorem is known for the adversary method, a semidefinite programming 
quantity which lower bounds  quantum query complexity.  There it holds that 
$\ADV(f\circ g) \ge \ADV(f) \ADV(g)$ \cite{Amb03,HLS07}.  It would be interesting to develop a theory 
to capture these cases as well. 

\section*{Acknowledgements}
We would like to thank Mario Szegedy for many insightful conversations.  We would also like to 
thank the anonymous referees of ICALP 2008 for their helpful comments.

\newcommand{\slasha}{\discretionary{/}{/}{/}}

\end{document}